\newcolumntype{P}[1]{>{\centering\arraybackslash}p{#1}}
\newcolumntype{L}[1]{>{\raggedright\let\newline\\\arraybackslash\hspace{0pt}}m{#1}}
\newcolumntype{C}[1]{>{\centering\let\newline\\\arraybackslash\hspace{0pt}}m{#1}}
\newcolumntype{R}[1]{>{\raggedleft\let\newline\\\arraybackslash\hspace{0pt}}m{#1}}
\theoremstyle{definition} 
\newtheorem{prop}{Proposition}
\theoremstyle{definition} 
\theoremstyle{remark} 
\begin{document}
\title{Causality-based Cost Allocation for\\ 
    Peer-to-Peer Energy Trading in Distribution System}

\author{
\IEEEauthorblockN{Hyun Joong Kim}
\IEEEauthorblockA{Dept. of Energy Engineering \\
Korea Institute of Energy Technology\\
Naju, South Jeolla, Korea
\\hyunjoongkim@kentech.ac.kr\vspace{-12mm}}\\
\and
\IEEEauthorblockN{Yong Hyun Song}
\IEEEauthorblockA{NEXT group\\
Seoul, Korea
\\yh.song@nextgroup.or.kr\vspace{-12mm}}
\and
\IEEEauthorblockN{Jip Kim}
\IEEEauthorblockA{Dept. of Energy Engineering \\
Korea Institute of Energy Technology\\
Naju, South Jeolla, Korea
\\jipkim@kentech.ac.kr\vspace{-12mm}}
}

\maketitle

\begin{abstract}
While peer-to-peer energy trading has the potential to harness the capabilities of small-scale energy resources, a peer-matching process often overlooks power grid conditions, yielding increased losses, line congestion, and voltage problems.
This imposes a great challenge on the distribution system operator (DSO), which can eventually limit peer-to-peer energy trading.
To align the peer-matching process with the physical grid conditions, this paper proposes a cost causality-based network cost allocation method and the grid-aware peer-matching process.
Building on the cost causality principle, the proposed model utilizes the network cost (loss, congestion, and voltage) as a signal to encourage peers to adjust their preferences ensuring that matches are more in line with grid conditions, leading to enhanced social welfare. Additionally, this paper presents mathematical proof showing the superiority of the causality-based cost allocation over existing methods. 
\end{abstract}

\begin{IEEEkeywords}
    Causality, distribution system operator, network cost, network violation, peer-to-peer energy trading, system loss
\end{IEEEkeywords}

\section{introduction}
The power system is undergoing a transition from its traditional hierarchical structure to a more decentralized structure to accommodate the increasing penetration of small-scale energy resources. Among the possible alternatives, peer-to-peer (P2P) energy trading, a promising approach to harnessing the potential of resources, inevitably exploits electric distribution systems~\cite{kim2023pricing}. The integration of P2P energy trading influences the efficiency and reliability of the distribution network managed by the distribution system operator (DSO). Without appropriate coordination of the P2P energy trading platform, the outcomes of P2P energy trading can compromise voltage security, induce line congestion and result in increased system losses\cite{tushar2021peer}.

To address this challenges posed by the integration of P2P energy trading, it is vital to motivate peers to engage in network-friendly trades. 
Recent literature has explored the imposition of network costs to achieve this objective. Universal cost allocation, which charges all peers to pay the network cost proportionally to their trading volume at a fixed rate, widely employed for distributing network costs arising from P2P energy trading \cite{tushar2019grid,kim2019direct,nguyen2021distributed}. However, universal cost allocation is externally set and fails to account for the parties responsible for stressing the grid conditions. 
Alternative cost allocation approaches, such as the electrical distance and zonal methods, have been put forward. Theses methods have shown promise in fostering local energy trading and reduce system loss \cite{anoh2019energy,paudel2020peer,baroche2019exogenous}. Nonetheless, they fall short in encapsulating the causality between P2P trading and the grid conditions.
Another coordination framework, which restrict transactions violating network constraints, is considered as a robust method to ensure network reliability \cite{guerrero2018decentralized,haggi2021multi,morstyn2019integrating}. However, there is a limitation in inducing action of peers to the optimal market efficiency, leading to sub-optimal trading outcomes because of limiting the opportunities for peers to improve market results.
The approach in \cite{kim2019p2p} calculates distribution marginal prices (DLMPs) using trading information acquired through the trading process and allocate network costs. This coordination incentivizes peers to produce network-friendly trading. However, it has a limitation in that it disperses the responsibility for network violations, resulting in unnecessary costs for their resolution.

Causality-based network cost allocation can enhance economic efficiency and competitiveness in the electricity market, and is in compliance with the guidelines of the Federal Power Act (FPA), which mandates that tariffs be just and reasonable \cite{maser2011s}.
Despite its significance, the causality principle has largely been overlooked in the context of P2P energy trading. To bridge this gap, we propose a \textit{cost causality}-based network cost allocation method tailored for P2P energy trading. 
This proposed method distributes network costs in a way that mirrors the causal relationship between the actions of individual peers and their subsequent impact on the distribution system.
By doing so, we foster efficient and equitable energy trading among peers, taking into account the physical conditions of the distribution system.
The main contributions of this paper are threefold:
\begin{enumerate}
    \item Demonstrate the superiority of causality-based cost allocation over exogenous methods through mathematical proofs and empirical results.
    \item Propose a causality-based cost allocation method that employs analytically derived partial derivatives from the power flow equation.
    \item Design coordinated P2P energy trading and network operation using causality-based cost allocation.
\end{enumerate}

\section{P2P energy trading in distribution network}
\allowdisplaybreaks
In this section, we propose causality-based network cost allocation and demonstrate the superiority of causality-based network cost allocation.

\subsection{Network description}
Consider a radial distribution network, $\mathcal{D(N,L)}$, consisting of a set of nodes $\mathcal{N}$ and a set of line $\mathcal{L}$. 
Nodes are indexed as ${n}\!=\!0,1,\!\ldots\!,{N}$, where node 0 acts as the slack bus. 
Lines are indexed by ${l}\!=\!0,1,\!\ldots\!,{L}$. Any line $l$ is represented by the pair $(n,m)$ denoting the nodes it connects.
For line $l$, $i_l$ represents the complex current and ${s}^\mathrm{f}_{l} \!=\! {p}^\mathrm{f}_{l} \!+\! \boldsymbol{j}{q}^\mathrm{f}_{l}$ indicates the complex power flow.
$Y_{nm} \!=\! G_{nm} \!+\! \boldsymbol{j}\mathrm{B}_{nm}$ are the admittance of the line connecting nodes $n$ to $m$. 
For each node $n$, $v_n$ denotes the complex voltage and ${s}_{n} \!=\! {p}_{n} \!+\! \boldsymbol{j}{q}_{n}$ symbolizes the net complex power injection. The variable $o$ represents system loss.

\subsection{Peer model}
We categorize peers into two groups: a set of selling peers $\mathcal{A}^\mathrm{S}$ indexed by $i$ and a set of buying peers $\mathcal{A}^\mathrm{B}$ indexed by $j$. The combined set of all peers is denoted as $\mathcal{A} = \mathcal{A}^\mathrm{S} \cup \mathcal{A}^\mathrm{B}$, where peers are indexed as ${k} = 1, \ldots, {K}$, and $p_k$ is the trading volume of peer \textit{k} located at node \textit{k}.

\subsubsection{Selling peer}
The cost function $c_i(\cdot)$ of the selling peer ($i\in\mathcal{A}^\mathrm{S}$) is modeled as a quadratic convex function:\begin{subequations}
    \begin{align}
        & {c}_{i}({p}_{i})=\mathrm{\alpha}_{i} {p}_{i}^2
            +\mathrm{\beta}_{i} {p}_{i}
            +\mathrm{\gamma}_{i},
                \label{eq:cost_ftn}\\
        & \underline{{P}}_{i} \le {p}_{i} \le \overline{{P}}_{i},
            \label{eq:gen_bound}
    \end{align}\label{eq:selling_peer_mdoel}
\end{subequations}
where the parameters of the cost function, $\mathrm{\alpha}_{i}$, $\mathrm{\beta}_{i}$, and $\mathrm{\gamma}_{i}$ are specific to each selling peer, and ${p}_{i}$ is power generation of peer ${i}$. 
The cost function is assumed to be strictly convex, while power generation is constrained between $\underline{{P}}_{i}$ and $\overline{{P}}_{i}$.

\subsubsection{Buying peer}

The utility function $h_j(\cdot)$ of a buying peer $(j\!\in\!\mathcal{A}^\mathrm{B})$ is formulated as a piece-wise quadratic function \cite{samadi2010optimal}:
\begin{align}
        &{h}_j(p_j) \coloneqq
        \begin{dcases}
            \mathrm{\beta}_j p_j-\mathrm{\alpha}_j {p}^2_j 
                &\quad \text{if}~\underline{{P}}_{j} \le {p}_{j} \le \frac{\beta_{j}}{2\alpha_{j}},\\
            \frac{\mathrm{\beta}_j^2}{4 \mathrm{\alpha}_j} 
                &\quad \text{if}~\frac{\beta_{j}}{2\alpha_{j}} \le p_j \le \overline{{P}}_{j},
        \end{dcases}\label{utility ftn}
\end{align}\label{eq:buying_peer_model}
where $\mathcal{\alpha}_{j}$ and $\mathcal{\beta}_{j}$ denote consumer parameters inherent to the utility function when electricity consumption ${p}_{j}$ is bounded between $\underline{{P}}_{j}$ and $\overline{{P}}_{j}$.

Peer modeling based on convex functions, nonempty, and compact sets guarantees that the optimal strategic balance point in transactions between peers, called Nash equilibrium, where a peer cannot gain anything by changing only their own strategy, and the first-order condition between the peer’s utility function and constraints at the equilibrium point is satisfied~\cite{rosen1965existence}.

\subsection{Network cost}
The network costs in this paper pertain to three grid conditions: voltage security, line congestion, and system loss. 
We make the assumption that these network costs are levied in a volumetric manner, proportionally to the amount of contributions (changes in voltage $\hat{v}_n$, line flow $\Hat{s}_l$, and system loss $\Hat{o}$) induced by individual peer trade, which can be formulated:\begin{subequations}
    \begin{align}
        c_v|\Hat{v}_n(P)| & \coloneqq c_v(|v_n(P+P^{0})|-|v_n(P^{0})|), \label{voltage security cost} \\
        c_s|\Hat{s}_l^\mathrm{f}(P)| & \coloneqq c_s(|s_l^\mathrm{f}(P+P^{0})|-|s_l^\mathrm{f}(P^{0})|), \label{congestion cost} \\
        c_o\Hat{o}(P) & \coloneqq c_o(o(P+P^{0})-o(P^{0})), \label{loss cost}
    \end{align}\label{network_cost_def}\end{subequations} where $P \!=\! (p_1, \!\ldots\!, p_k, \!\ldots\!, p_K)$ denotes the state variable of the incremental injected power according to energy trading, $P^{0}$ is the initial state without P2P energy trading.
$c_v$, $c_s$, and $c_o$ are unit network costs for voltage security, line congestion, and system loss.
$c_v$ and $c_s$ are imposed only when violating technical limits as follows:\begin{subequations}
    \begin{align}
        &c_v =
            \begin{dcases}
                0, \quad &\text{if}~~ \underline{{V}}_n \le \vert{v}_n(P+P^{0})\vert \le \overline{{V}}_n,\\
                \hat{c}_v, &\text{otherwise,}\\
            \end{dcases}\label{}\\
        &c_s =
            \begin{dcases}
                0, \quad &\text{if}~~ \underline{{S}}_l^\mathrm{f} \le \vert{s}_l^\mathrm{f}(P+P^{0})\vert \le \overline{{S}}_l^\mathrm{f},\\
                \hat{c}_s, &\text{otherwise,}
            \end{dcases}\label{}
    \end{align}\label{flow and voltage function}\end{subequations} where $\hat{c}_v$ and $\hat{c}_s$ are unit voltage and congestion network costs and the dead bands are set as $[\underline{V}_n,\overline{V}_n]$ and $[\underline{S}^\mathrm{f}_l,\overline{S}^\mathrm{f}_l]$.

\subsection{Causality-based network cost allocation}
We derive a mathematical expression of the impact of energy trades on the distribution system and show how to impose the network costs in \eqref{network_cost_def} based on the causality principle.

\begin{prop}
\label{prop:causality based cost allocation}
The network costs incurred by the P2P energy trading in a distribution network can be quantified as follows:\begin{subequations}\begin{align}
    &\! c_v|\Hat{v}_n(P)|  \approx c_v\sum_{k\in \mathcal{A}}\frac{1}{|v_n(P^0)|}\!\operatorname{Re}\!\bigg(\!\overline{v}_n(P^0)\frac{\partial v_n(P^0)}{\partial p_k}\bigg)p_k, \!\!\label{voltage_violation_approximation} \\
    &\! c_s|\Hat{s}_l^\mathrm{f}(P)|  \approx c_s\sum_{k\in \mathcal{A}}\frac{1}{|s_l^\mathrm{f}(P^0)|}\!\operatorname{Re}\!\bigg(\!\overline{s}_l^\mathrm{f}(P^0)\frac{\partial s_l^\mathrm{f}(P^0)}{\partial p_k}\bigg)p_k, \label{congestion_approximation} \\
    &\! c_o\Hat{o}(P) \approx 2c_o\!\sum_{k\in \mathcal{A}}\operatorname{Re}\!\Bigg(\!\sum_{n=0}^{N}\!\frac{\partial \overline{v}_n(P^0)}{\partial p_k}\!\bigg(\!\sum_{m=0}^{N}\!{G}_{nm}v_m(P^0)\!\bigg)\!\Bigg)p_k. \label{loss_approximation}
\end{align}\label{causality-based allocated network cost}
\end{subequations}
\end{prop}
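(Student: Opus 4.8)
The plan is to recognize all three identities in Proposition~\ref{prop:causality based cost allocation} as first-order Taylor expansions of the network-cost quantities defined in \eqref{network_cost_def} about the no-trade operating point $P^0$, evaluated at the trade increment $P=(p_1,\ldots,p_K)$. Writing any of the three quantities as $f(P+P^0)-f(P^0)\approx\sum_{k\in\mathcal{A}}\frac{\partial f(P^0)}{\partial p_k}\,p_k$, the task reduces to evaluating the three partial derivatives $\partial|v_n|/\partial p_k$, $\partial|s_l^\mathrm{f}|/\partial p_k$, and $\partial o/\partial p_k$ at $P^0$ and matching them to \eqref{voltage_violation_approximation}--\eqref{loss_approximation}. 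First I would fix the regularity assumptions that make this legitimate: $v_n(\cdot)$ and $s_l^\mathrm{f}(\cdot)$ are smooth in $P$ away from $|v_n|=0$ and $|s_l^\mathrm{f}|=0$, the injections $P$ are small relative to $P^0$ so that second-order terms are negligible, and the dead-band indicators $c_v,c_s$ are held fixed at the expansion point.

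For the voltage and congestion terms I would use a single identity for the gradient of a complex magnitude: for a complex-valued differentiable map $z(P)$ with $z\neq 0$, writing $|z|=\sqrt{z\overline{z}}$ and differentiating (using that $p_k$ is real, so conjugation commutes with $\partial/\partial p_k$) gives $\frac{\partial|z|}{\partial p_k}=\frac{1}{|z|}\operatorname{Re}\!\big(\overline{z}\,\frac{\partial z}{\partial p_k}\big)$. Substituting $z=v_n$ yields \eqref{voltage_violation_approximation} and $z=s_l^\mathrm{f}$ yields \eqref{congestion_approximation} directly, once the Taylor expansion above is inserted; the sensitivities $\partial v_n/\partial p_k$ and $\partial s_l^\mathrm{f}/\partial p_k$ are the analytically available power-flow Jacobian entries referenced in the paper's contributions.

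The loss term needs one extra ingredient, namely an explicit expression for $o$ in bus coordinates. I would start from conservation of power, $o=\sum_{n} p_n=\operatorname{Re}\big(\sum_n v_n\overline{i}_n\big)$, and substitute the nodal current law $i_n=\sum_m Y_{nm}v_m$ to obtain $o=\operatorname{Re}\big(\sum_{n,m} v_n\overline{Y}_{nm}\overline{v}_m\big)$. Expanding $\overline{Y}_{nm}=G_{nm}-\boldsymbol{j}\mathrm{B}_{nm}$ and using the symmetry $G_{nm}=G_{mn}$, $\mathrm{B}_{nm}=\mathrm{B}_{mn}$, the susceptance contribution is antisymmetric in $(n,m)$ and cancels in the double sum, leaving the real, symmetric quadratic form $o=\sum_{n,m}G_{nm}\operatorname{Re}(\overline{v}_n v_m)=\operatorname{Re}\big(\sum_n\overline{v}_n\sum_m G_{nm}v_m\big)$. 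Differentiating in $p_k$ produces two terms, one differentiating $\overline{v}_n$ and one differentiating $v_m$; relabeling $n\leftrightarrow m$ and invoking $G_{nm}=G_{mn}$ together with $\operatorname{Re}(w)=\operatorname{Re}(\overline{w})$ shows the two terms share the same real part, which collapses them into the single factor-of-two expression of \eqref{loss_approximation}.

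I expect the main obstacle to be the loss step rather than the magnitude derivatives: one must justify both that the imaginary (susceptance) part of the quadratic form vanishes and that the factor $2$ in \eqref{loss_approximation} genuinely arises from the symmetric structure of $G$, since a careless differentiation would leave an apparently asymmetric single-sum expression. A secondary subtlety is the legitimacy of the linearization itself --- the magnitude $|z|$ is not differentiable at $z=0$ and the cost indicators $c_v,c_s$ are piecewise constant, so I would state explicitly that the approximation is a local one about a nonzero, fixed-regime operating point $P^0$, which is exactly the regime relevant to marginal network-cost allocation.
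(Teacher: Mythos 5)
Your proposal is correct and follows essentially the same route as the paper: a first-order Taylor expansion of each network-cost quantity about $P^0$, the identity $\frac{\partial |z|}{\partial p_k}=\frac{1}{|z|}\operatorname{Re}\big(\overline{z}\,\frac{\partial z}{\partial p_k}\big)$ applied to $z=v_n$ and $z=s_l^\mathrm{f}$, and the conductance-weighted quadratic form for the loss gradient. The only differences are of detail: you actually derive the loss sensitivity (including the cancellation of the susceptance part and the factor of $2$ from the symmetry of $G_{nm}$), which the paper merely cites from a reference, while the paper additionally spells out how $\frac{\partial v_n}{\partial p_k}$ is computed by implicitly differentiating the power-flow equation $\overline{s}_n=\overline{v}_n\sum_m Y_{nm}v_m$ under the fixed slack-bus condition to obtain a uniquely solvable linear system --- a computation you defer to ``available power-flow Jacobian entries.''
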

\vspace{-8mm}
\begin{proof}
\label{proof:causality based cost allocation}
Using the Taylor series expansion, the changes in voltage $(\Hat{v}_n(P))$, line flow $(\Hat{s}_l^\mathrm{f}(P))$, and system loss $(\Hat{o}(P))$ in \eqref{network_cost_def} can be linearly approximated at the initial state $(P^{0})$, neglecting high order terms:
\begin{subequations}
    \begin{align}
        |{v}_n(P)| &\approx  \sum_{k\in\mathcal{A}}\frac{\partial \vert v_n(P) \vert }{\partial p_k}\Big|_{\substack{P=P^0}} p_k + |{v}_n(P^0)|\label{eq:partial_VtoP},\\
        |{s}^\mathrm{f}_l(P)| &\approx  \sum_{k\in\mathcal{A}}\frac{\partial \vert s^\mathrm{f}_l(P) \vert }{\partial p_k}\Big|_{\substack{P=P^0}} p_k +|{s}^\mathrm{f}_l(P^0)|\label{eq:partial_StoP},\\
        {o}(P) &\approx  \sum_{k\in\mathcal{A}}\frac{\partial o(P) }{\partial p_k}\Big|_{\substack{P=P^0}}  p_k+{o}(P^0)\label{eq:partial_OtoP}.
    \end{align}
    \label{eq:partialtoP}
\end{subequations}
The squared voltage magnitude is written as:
\begin{align}
    & |{v}_n(P)|^2 = v_n(P)\overline{v}_n(P).
    \label{Voltage magnitude}
\end{align}
By taking partial differentiation with respect to $p_k$ in \eqref{Voltage magnitude}, we can define the partial derivative of the voltage magnitude as:
\begin{align}
\begin{split}
    & \frac{\partial |{v}_n(P)|}{\partial p_k} 
    = \frac{1}{|v_n(P)|}\operatorname{Re}\bigg(\overline{v}_n(P)\frac{\partial v_n(P)}{\partial p_k}\bigg).
    \label{PDE of Voltage magnitude}
\end{split}
\end{align}
We can obtain the partial derivative of the complex voltage $(\!\frac{\partial v_n(P)}{\partial p_k}\!)$ in \eqref{PDE of Voltage magnitude} using the power flow equation. The relationship between injected power and voltage at node \textit{n} is expressed as:
\begin{align}
\begin{split}
    & \overline{s}_n(P) = \overline{v}_n(P) \!\sum_{m=0}^{N}\!{Y}_{nm}v_m(P), \quad \forall n \!\in\! \mathcal{N} \!\setminus\! \{0\},
    \label{pfeqn}
\end{split}
\end{align}
where $\overline{s}_n$ and $\overline{v}_n$ are the conjugate of complex injected power and voltage at node \textit{n}. As known, the slack bus is to maintain its voltage constant. Thus, it holds that:
\begin{align}
    \frac{\partial v_0(P)}{\partial p_k}=0,\frac{\partial \overline{v}_0(P)}{\partial p_k}=0.
    \label{v/p at slack}
\end{align}
By taking partial differentiation with respect to $p_k$ in \eqref{pfeqn}, we obtain the system equation as follows:
\begin{align}
    \mathbbm{1}\!_{\{n\!=\!k\}}\!=\!
        \frac{\partial \overline{v}_n(P)}{\partial p_k} \!\sum_{m=0}^{N} \!{Y}_{nm} v_m(P) +\overline{v}_n(P)\!\sum_{m=1}^{N}\!{Y}_{nm}\frac{\partial v_m(P)}{\partial p_k}.
    \label{PDE_V}
\end{align}
Equation~\eqref{PDE_V} is not linear with respect to complex numbers; however, it exhibits linearity over $\frac{\partial v_n(P)}{\partial p_k}$ and $\frac{\partial \overline{v}_n(P)}{\partial p_k}$. 
Furthermore, it has an unique solution within a radial distribution network, which can be computed by solving \eqref{PDE_V} with rectangular coordinates\cite{christakou2013efficient}. Once the partial derivative of voltage is calculated, the partial derivative of the voltage magnitude can be calculated by substituting the value into \eqref{PDE of Voltage magnitude} as follows:
\begin{align}
    \frac{\partial |{v}_n(P)|}{\partial p_k} = \frac{1}{|v_n(P)|}\!\operatorname{Re}\!\bigg(\!\overline{v}_n(P)\frac{\partial v_n(P)}{\partial p_k}\bigg)
\end{align}

Similarly, the partial derivative of the flow magnitude to the P2P trading in \eqref{eq:partial_StoP} is derived from the power flow equation:
\begin{align}
\begin{split}
    {s_l^\mathrm{f}(P)}={v_n(P)}{\overline{v}_n(P)}{\overline{{Y}}_{nm}}.
    \label{line flow equation}
\end{split}
\end{align}
By taking partial differentiation with respect to $p_k$ in \eqref{line flow equation}, we obtain the partial derivative of complex flow as follows:
\begin{align}
\begin{split}
    \frac{\partial s_l^\mathrm{f}(P)}{\partial p_k}
    &\!=\!\overline{Y}_{nm}\!\bigg(\!\overline{v}_n(P)\!\frac{\partial v_n(P)}{\partial p_k}
    \!+\!v_n(P)\!\frac{\partial \overline{v}_n(P)}{\partial p_k}\!\bigg)\\
    &\!=\!2\overline{Y}_{nm}\operatorname{Re}\bigg(\overline{v}_n(P)\frac{\partial v_n(P)}{\partial p_k}\bigg).
        \label{line flow derivative}
\end{split}
\end{align}
$\frac{\partial s_l^\mathrm{f}(P)}{\partial p_k}$ can be computed by substituting $\frac{\partial v_n(P)}{\partial p_k}$ obtained from \eqref{PDE_V} into \eqref{line flow derivative}. After the value is calculated, the partial derivative of flow magnitude can be obtained by:
\begin{align}
    \frac{\partial |{s}_l^\mathrm{f}(P)|}{\partial p_k}
    =\frac{1}{|s_l^\mathrm{f}(P)|}\operatorname{Re}\bigg(\overline{s}_l^\mathrm{f}(P)\frac{\partial s_l^\mathrm{f}(P)}{\partial p_k}\bigg).
        \label{line flow mag derivative}
\end{align}

Finally, the partial derivative of the system loss to the P2P trading in \eqref{eq:partial_OtoP} can be obtained using a similar approach using the line conductance ${G}_{nm}$ and nodal voltage as follows \cite{zhou2008simplified}:
\begin{align}
    \frac{\partial {o}(P)}{\partial p_k}\!
    &= \!2c_o\!\sum_{k\in \mathcal{A}}\operatorname{Re}\!\Bigg(\!\sum_{n=0}^{N}\!\frac{\partial \overline{v}_n(P)}{\partial p_k}\!\bigg(\!\sum_{m=0}^{N}\!{G}_{nm}v_m(P)\!\bigg)\!\Bigg)p_k,
    \label{take the partial derivative to system loss}
\end{align}
where the derivative of conjugate complex voltage $(\frac{\partial \overline{v}_n(P)}{\partial p_k})$ is obtained from \eqref{PDE_V}.
\end{proof}
Proposition~\ref{prop:causality based cost allocation} facilitates the allocation of network costs according to the contribution of each trade, as measured by the partial derivative. This approach quantifies the impact of a peer trade on changes in voltage, flow, and system loss.

\subsection{Superiority of causality-based network cost allocation}
We demonstrate that the superiority of causality-based cost allocation in achieving optimal trading outcomes over the universal cost allocation in three steps:
1) We first derive the trading results that maximize the social welfare.
2) Next we demonstrate that the optimal market outcome under universal cost allocation cannot be achieved under normal conditions. 3) Finally we prove that the optimal market outcomes under causality-based network cost allocation is achieved at Nash equilibrium state.

\subsubsection{Social welfare maximization}
The outcome of P2P energy trading, where the sum of peer welfare is maximized, can be modeled as follows:\begin{align}\begin{split}
    W^\mathrm{opt}(P)\coloneqq \max_{P\in\mathbb{R}^{K}}
    &\sum_{k\in \mathcal{A}}u_k(p_k)
    -c_v\sum_{n\in N}|\Hat{v}_n(P)|\\
    &
    -c_s\sum_{l\in L}|\Hat{s}_l^\mathrm{f}(P)|
    -c_o\Hat{o}(P),
    \end{split}\label{optimal market equation}
\end{align}
where the revenue and utility functions of a peer are modeled using the generation quantity of selling peer $p_{k}$, the trading volume between peers $p_{ik}$ and the trading price $\mathrm{\lambda}_{k}^*$, which is proposed by the selling peer $k$ and remains constant in the equilibrium state of a perfectly competitive market \cite{yan2020distribution}. The utility of a peer is formulated as a strictly concave function:
\begin{align}
    \label{welfare of peer}
    {u}_k(p_k)=
        \begin{dcases}
            \mathrm{\lambda}_{k}^* p_k-{c}_k(p_k), &\forall k\in \mathcal{A}^\mathrm{S}\\
            \sum_{i\in \mathcal{A}^\mathrm{S}} h_k(p_{ik})-\sum_{i\in \mathcal{A}^\mathrm{S}}\mathrm{\lambda}_{i}^*p_{ik}, &\forall k\in \mathcal{A}^\mathrm{B}.
        \end{dcases}
\end{align}

If there exists a $P^{*o}=(p_1^{*o},\ldots p_k^{*o},\ldots p_K^{*o})$ that leads to the optimal trading result, then \eqref{optimal market equation} satisfies the stationary condition for every peer as follows:
\begin{align}
    \begin{split}
    \frac{\partial w_k(p_k^{*o})}{\partial p_k}\!=\!{c_v}\!\sum_{n\in \mathcal{N}}\!\frac{\partial \!|\Hat{v}_n(\!P^{*\!o}\!)|}{\partial p_k} \!+\! {c_s}\!\sum_{l\in \mathcal{L}}\!\frac{\partial |\Hat{s}_l^\mathrm{f}(\!P^{*\!o}\!)|}{\partial p_k}
    \!+\!{c_o}\!\frac{\partial \Hat{o}(\!P^{*\!o}\!)}{\partial p_k}.
    \label{stationary condition}
    \end{split}
\end{align}

\subsubsection{Nash equilibrium under universal network cost allocation}
Universal network cost allocation aims to equitably distribute network costs among all peers. The welfare function of a peer in P2P energy trading can be expressed as:
\begin{align}
    \begin{split}
     w^\mathrm{u}_k(P)&\coloneqq  u_k(p_k)
    -{c_v}\sum_{n\in \mathcal{N}}|\hat{v}_n(P)|\frac{p_k}{\sum_{t\in \mathcal{A}}p_t}\\
    & -{c_s}\sum_{l\in \mathcal{L}}|\hat{s}_l^\mathrm{f}(P)|\frac{p_k}{\sum_{t\in \mathcal{A}}p_t}
    -{c_o}\Hat{o}(P)\frac{p_k}{\sum_{t\in \mathcal{A}}p_t}.
    \end{split}\label{welfare function under universal network allocation}
\end{align}
Using the welfare function in \eqref{welfare function under universal network allocation}, we aim to verify if the optimal trading result is obtained at Nash equilibrium state under universal network cost allocation. Thus, we propose and prove:
\begin{prop}
Let $P^{*u}=(p_1^{*u},\ldots p_k^{*u},\ldots,p_K^{*u})$ and $W^\mathrm{u}(P)=\sum_{k\in \mathcal{A}}w^\mathrm{u}_k(P)$ denote the Nash equilibrium state and total social welfare under universal network cost allocation. Then, $W^\mathrm{u}(P^{*u}) \le W^\mathrm{u}(P^{*o})$ and the equality holds when there are only one selling node and one buying node, \textit{i.e.,} all buying peers are located at one node and selling peers are at another node.
\end{prop}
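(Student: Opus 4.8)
The plan is to exploit the fact that, under universal allocation, the per-peer cost shares are fractions of the \emph{same} total network cost and that these fractions sum to one. Summing the welfare functions in \eqref{welfare function under universal network allocation} over all peers, the weights $p_k/\sum_{t}p_t$ collapse to unity, so $W^\mathrm{u}(P)=\sum_k w^\mathrm{u}_k(P)$ reduces exactly to the social-welfare objective of \eqref{optimal market equation}; that is, $W^\mathrm{u}(\cdot)$ and the function maximized to obtain $P^{*o}$ are one and the same. Because the utilities $u_k$ are strictly concave and the linearized network cost of Proposition~\ref{prop:causality based cost allocation} is linear (hence convex) in $P$, this common objective is strictly concave and attains its unique maximum at $P^{*o}$. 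The inequality $W^\mathrm{u}(P^{*u})\le W^\mathrm{u}(P^{*o})$ is then immediate: $P^{*o}$ is by construction the global maximizer of $W^\mathrm{u}$, while $P^{*u}$ is merely the point selected by the non-cooperative game, so its value cannot exceed the maximum.

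For the equality case I would argue through the first-order conditions. By strict concavity of the common objective, $W^\mathrm{u}(P^{*u})=W^\mathrm{u}(P^{*o})$ holds if and only if $P^{*u}=P^{*o}$, so it suffices to determine when the Nash equilibrium coincides with the social optimum. Writing the aggregate network cost as $C(P)=c_v\sum_n|\hat v_n(P)|+c_s\sum_l|\hat s_l^\mathrm{f}(P)|+c_o\hat o(P)$ and setting $\Sigma=\sum_t p_t$, stationarity of $w^\mathrm{u}_k$ in its own variable gives the Nash condition $\partial u_k/\partial p_k=(\partial C/\partial p_k)(p_k/\Sigma)+C(\Sigma-p_k)/\Sigma^2$, whereas the social optimum satisfies $\partial u_k/\partial p_k=\partial C/\partial p_k$ as in \eqref{stationary condition}. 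Imposing $P^{*u}=P^{*o}$ and subtracting the two conditions leaves, for every peer that trades alongside at least one other, the requirement $\partial C/\partial p_k = C/\Sigma$; that is, every peer's marginal network cost must equal the common average network cost per unit volume.

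It then remains to show that this equalization is exactly what the single-selling-node/single-buying-node configuration produces (the sufficient direction asserted in the statement). When every selling peer injects at one node and every buying peer withdraws at another, the voltage, flow and loss terms depend on $P$ only through the two aggregate nodal injections, so all sellers share one marginal network cost and all buyers share another; moreover, the power-balance relation linking total generation and total consumption lets me reduce the entire cost to a homogeneous-degree-one function of the single aggregated trade volume, for which the marginal cost equals the average cost. This collapses both common marginal costs onto $C/\Sigma$, so $\partial C/\partial p_k=C/\Sigma$ holds for all $k$ and $P^{*u}=P^{*o}$, giving the equality.

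I expect this last step to be the main obstacle. The delicate point is reconciling the marginal cost of a generation increment at the selling node with that of a consumption increment at the buying node, which are not equal when the two are varied independently; the argument must lean on power balance so that a matched trade maps both increments to the same physical flow, rendering the linearized cost genuinely a function of one volume variable and hence homogeneous of degree one. Verifying that the Proposition~\ref{prop:causality based cost allocation} coefficients indeed realize this homogeneity, and that no boundary effects from the inactive dead bands of \eqref{flow and voltage function} intrude at $P^{*o}$, is where the real work lies; the remainder of the proof is bookkeeping.
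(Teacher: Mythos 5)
Your proposal follows essentially the same route as the paper: summing the universal welfare functions so the volumetric weights collapse to one, identifying $W^\mathrm{u}$ with the social-welfare objective of \eqref{optimal market equation} so that $P^{*o}$ is its maximizer and the inequality is immediate, and then equating the Nash and social first-order conditions to obtain the marginal-equals-average requirement, which is exactly the paper's condition \eqref{special_case}. The only divergence is at the very end: the paper simply asserts that this condition corresponds to the one-selling-node/one-buying-node configuration, whereas you attempt to actually derive the sufficiency via a degree-one homogeneity argument and, correctly, flag that reconciling the seller-side and buyer-side marginal costs through power balance is the unproven crux --- a gap that is present (silently) in the paper's own proof as well.
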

\begin{proof}
Given the the welfare function of a peer under universal network cost allocation in \eqref{welfare function under universal network allocation}, the total social welfare can be represented as:
\begin{align}
\begin{split}
   W^\mathrm{u}(&P) \!=\! \sum_{k\in \mathcal{A}}\!u_k(p_k)\! - \!{c_v}\!\sum_{k\in \mathcal{A}}\!\sum_{n\in \mathcal{N}}\!|\hat{v}_n(P)|\!\frac{p_k}{\sum_{t\in \mathcal{A}}\!p_t} \\
    \!-\!&{c_s}\!\sum_{k\in \mathcal{A}}\!\sum_{l\in \mathcal{L}}\!|\hat{s}_l^\mathrm{f}(P)|\!\frac{p_k}{\sum_{t\in \mathcal{A}}\!p_t}\! - \!{c_o}\!\sum_{k\in \mathcal{A}}\!\Hat{o}(P)\!\frac{p_k}{\sum_{t\in \mathcal{A}}\!p_t}.
\end{split}
\label{total social welfare under uni}
\end{align}
Upon canceling out $\sum_{k\in \mathcal{A}}p_k$ with ${\sum_{t\in \mathcal{A}}p_t}$, $W^\mathrm{u}(P)$ aligns with the social maximization formulation in \eqref{optimal market equation}. Consequently, we can infer:
\begin{align}
    W^\mathrm{u}(P^{*o}) = W^\mathrm{opt}(P^{*o}).
    \label{Eq:comparing welfare under uni_1}
\end{align}
Since $W^\mathrm{u}(P^{*o})$ represents the upper bound of $W^\mathrm{u}(P)$, it follows that $W^\mathrm{u}(P^{*u})$ is less than or equal to $W^\mathrm{u}(P^{*o})$:
\begin{align}
    W^\mathrm{u}(P^{*u}) \leq W^\mathrm{u}(P^{*o}) = W^\mathrm{opt}(P^{*o}).
    \label{Eq:comparing welfare under uni_2}
\end{align}

To delineate the conditions under which equality might hold,assume $P^{*u}$ and $P^{*o}$ are equal. By this assumption, the first derivative of the welfare function at these states should also be equivalent, expressed as:
\begin{align}
    \frac{\partial u_k(p_k^{*u})}{\partial p_k}=\frac{\partial u_k(p_k^{*o})}{\partial p_k},
    \label{universal: derivative of welfare}
\end{align}
where $\frac{\partial u_k(p_k^{*o})}{\partial p_k}$ is previously defined in \eqref{stationary condition}.
Additionally, $\frac{\partial u_k(p_k^{*\!u})}{\partial p_k}$ can be deduced using the first-order condition in \eqref{welfare function under universal network allocation}, with the detailed derivation available in the Appendix.
Substituting these terms into \eqref{universal: derivative of welfare} yields:
\begin{align}
\begin{split}
    &c_v\bigg(\frac{\partial |\Hat{v}_n(P^{*u})|}{\partial p_k}\!-\!\frac{|\Hat{v}_n\!(P^{*u})|}{\sum_{t\in \mathcal{A}}p_t^{*u}}\bigg) 
    \!+\!c_s\bigg(\frac{\partial |\Hat{s}_l^\mathrm{f}\!(P^{*u})|}{\partial p_k}\!-\!\frac{|\Hat{s}_l^\mathrm{f}\!(P^{*u})|}{\sum_{t\in \mathcal{A}}p_t^{*u}}\bigg) \\
    &+c_o\bigg(\frac{\partial \Hat{o}(P^{*u})}{\partial p_k}\!-\!\frac{\Hat{o}(P^{*u})}{\sum_{t\in \mathcal{A}}p_t^{*u}}\bigg)=0.
\end{split}\label{special_case}
\end{align}
Consequently, $W^\mathrm{u}(P^{*u})$ is equal to $W^\mathrm{u}(P^{*o})$ only when condition \eqref{special_case} is satisfied. This scenario where \eqref{special_case} holds is notably rare in a standard distribution system. Specifically, it implies a situation where the marginal voltage magnitude equals to the average voltage magnitude across the trading volume, the marginal flow magnitude matches the average flow magnitude, and the marginal system loss corresponds to the average system loss. This setting essentially posits that all selling peers are located in one node, and all buying peers are clustered at another node within the network.
\end{proof}

\subsubsection{Nash equilibrium under causality-based network cost allocation}
According to Proposition~\ref{prop:causality based cost allocation} the contribution to the network based on the trading volume of each peer can be differentiated, and the welfare function of peer under causality-based cost allocation is modeled as:
\begin{align}
    \begin{split}
            w^\mathrm{c}_k(p_k)\coloneqq &u_k(p_k)
    -{c_v}\sum_{n\in \mathcal{N}}\frac{\partial |{v}_n(P^0)|}{\partial p_k}p_k\\
    &-{c_s}\sum_{l\in \mathcal{L}}\frac{\partial |{s}_l^\mathrm{f}(P^0)|}{\partial p_k}p_k
    -{c_o}\frac{\partial o(P^0)}{\partial p_k}p_k.        
    \end{split}\label{welfare function under causality-based network allocation}
\end{align}
The welfare function in \eqref{welfare function under causality-based network allocation} is employed to ascertain whether the optimal trading result is achieved at Nash equilibrium state under causality-based network cost allocation. Thus, we propose and prove:
\begin{prop}\label{prop3}
Let $P^{*c}\!\!=\!(p_1^{*c},\ldots p_k^{*c},\ldots,p_K^{*c})$ and $W^\mathrm{c}(P)\!=\!\sum_{k\in \mathcal{A}}w^\mathrm{c}_k(P)$ be Nash equilibrium state and total social welfare under causality-based network cost allocation. Then, $W^\mathrm{c}(P^{*c})$ is equal to $W^\mathrm{c}(P^{*o})$.
\end{prop}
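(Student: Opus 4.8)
The plan is to mirror the two-step logic of the universal-allocation proof but to show that the inequality obtained there becomes an equality here. Concretely, I would first establish a cost-recovery (budget-balance) identity showing that $W^{\mathrm c}(P)=W^{\mathrm{opt}}(P)$ as functions of $P$, which immediately yields $W^{\mathrm c}(P^{*c})\le W^{\mathrm c}(P^{*o})=W^{\mathrm{opt}}(P^{*o})$ exactly as in \eqref{Eq:comparing welfare under uni_1}. I would then show that the Nash stationarity conditions under causality-based allocation coincide with the social-planner stationarity condition \eqref{stationary condition}, forcing $P^{*c}=P^{*o}$ and hence turning that inequality into the claimed equality.

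For the first step, I would sum the per-peer welfare \eqref{welfare function under causality-based network allocation} over $k\in\mathcal A$ and regroup the allocated cost terms. Invoking the identities behind Proposition~\ref{prop:causality based cost allocation}, namely $\sum_{k\in\mathcal A}\frac{\partial|v_n(P^0)|}{\partial p_k}p_k\approx|\hat v_n(P)|$ together with the analogous expressions for flow and loss, the dispersed per-peer charges reassemble into the aggregate network costs $c_v\sum_n|\hat v_n(P)|$, $c_s\sum_l|\hat s_l^{\mathrm f}(P)|$, and $c_o\hat o(P)$. This reproduces $W^{\mathrm c}(P)=\sum_{k}u_k(p_k)-c_v\sum_n|\hat v_n(P)|-c_s\sum_l|\hat s_l^{\mathrm f}(P)|-c_o\hat o(P)=W^{\mathrm{opt}}(P)$, playing the same role the cancellation $\sum_k p_k/\sum_t p_t=1$ played in the universal case.

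For the second step, I would write the Nash first-order condition $\partial w_k^{\mathrm c}(p_k^{*c})/\partial p_k=0$. The decisive structural feature is that each allocated cost in \eqref{welfare function under causality-based network allocation} is \emph{linear} in the peer's own $p_k$ with coefficient $\frac{\partial|v_n(P^0)|}{\partial p_k}$ evaluated at the fixed state $P^0$, so differentiation simply returns that constant coefficient. Hence the equilibrium condition becomes
\begin{align*}
\frac{\partial u_k(p_k^{*c})}{\partial p_k}
= c_v\!\sum_{n\in\mathcal N}\!\frac{\partial|v_n(P^0)|}{\partial p_k}
+ c_s\!\sum_{l\in\mathcal L}\!\frac{\partial|s_l^{\mathrm f}(P^0)|}{\partial p_k}
+ c_o\frac{\partial o(P^0)}{\partial p_k}.
\end{align*}
Within the first-order approximation of \eqref{eq:partialtoP} the approximated cost functions are linear, so their gradients are constant and $\frac{\partial|\hat v_n(P^{*o})|}{\partial p_k}=\frac{\partial|v_n(P^0)|}{\partial p_k}$ (and likewise for flow and loss); the right-hand side above then matches that of \eqref{stationary condition} term by term. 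Since each $u_k$ is strictly concave, $\partial u_k/\partial p_k$ is strictly decreasing and hence injective, so the matching stationarity conditions force $p_k^{*c}=p_k^{*o}$ for every $k$, i.e.\ $P^{*c}=P^{*o}$, giving $W^{\mathrm c}(P^{*c})=W^{\mathrm c}(P^{*o})$.

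I expect the main obstacle to be the evaluation-point mismatch: the causality allocation prices each trade by marginal impacts computed at $P^0$, whereas the genuine social marginal cost in \eqref{stationary condition} is evaluated at $P^{*o}$. The argument is only clean once it is carried out consistently inside the linear Taylor regime of \eqref{eq:partialtoP}, where these gradients are constant and therefore coincide; I would state explicitly that the asserted equality holds under this first-order approximation. A secondary point to treat carefully is the invocation of strict concavity, guaranteed by the strictly convex costs \eqref{eq:selling_peer_mdoel} and strictly concave utilities \eqref{eq:buying_peer_model}, since it is this uniqueness that upgrades ``matching stationarity conditions'' into ``identical maximizers'' and thus delivers equality rather than the mere inequality found in the universal case.
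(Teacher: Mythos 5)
Your proposal is correct and follows essentially the same route as the paper: it matches the Nash first-order condition under causality-based allocation with the social-planner stationarity condition \eqref{stationary condition} inside the linear regime of \eqref{eq:partialtoP}, invokes strict concavity to conclude $P^{*c}=P^{*o}$, and separately reassembles the per-peer charges into the aggregate objective to identify $W^{\mathrm c}$ with $W^{\mathrm{opt}}$. Your explicit handling of the $P^0$-versus-$P^{*o}$ evaluation-point issue is a welcome clarification of a step the paper leaves implicit, but it is not a different argument.
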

\begin{proof}
The first-order condition for \eqref{welfare function under causality-based network allocation} at $P^{*c}$ can be expressed as follows:
\begin{align}
\begin{split}
    &\frac{\partial {u_k}\!(p^{*\!c}_k)}{\partial p_k}
    \!=\!{c_v}\!\sum_{n\in \mathcal{N}}\!\frac{\partial |{v}_n\!(\!P^0\!)|}{\partial p_k}
    \!+\!{c_s}\!\sum_{l\in \mathcal{L}}\!\frac{\partial |{s}_l^\mathrm{f}(\!P^0\!)|}{\partial p_k}
    \!+\!{c_o}\!\frac{\partial o(\!P^0\!)}{\partial p_k}.
    \label{first order condition causality allocation}
\end{split}
\end{align}
By comparing the stationary condition of \eqref{optimal market equation} with \eqref{first order condition causality allocation} and using linear approximation in \eqref{eq:partialtoP}, we obtain that:
\begin{align}
\begin{split}
    &\frac{\partial u_k(p_k^{*o})}{\partial p_k}\\
    &\!=\! {c_v}\!\sum_{n\in \mathcal{N}}\!\frac{\partial |\Hat{v}_n(P^{*o})|}{\partial p_k} \!+\! {c_s}\!\sum_{l\in \mathcal{L}}\!\frac{\partial |\Hat{s}_l^\mathrm{f}(P^{*o})|}{\partial p_k} \!+\! {c_o}\!\frac{\partial \Hat{o}(P^{*o})}{\partial p_k} \\
    &\!=\! {c_v}\!\sum_{n\in \mathcal{N}}\!\frac{\partial |{v}_n(P^0)|}{\partial p_k} \!+\! {c_s}\!\sum_{l\in \mathcal{L}}\!\frac{\partial |{s}_l^\mathrm{f}(P^0)|}{\partial p_k} \!+\! {c_o}\!\frac{\partial o(P^0)}{\partial p_k}\\
    &\!=\! \frac{\partial u_k(p_k^{*c})}{\partial p_k}
\label{optimal_equal_nash}
\end{split}
\end{align}
According to \eqref{welfare of peer}, the welfare function of the peer is strictly concave. Equation~\eqref{optimal_equal_nash} implies that $P^{*c}$ and $P^{*o}$ are equivalent, and the total social welfare under causality-based network cost allocation is equal at these states:
\begin{align}
    W^\mathrm{c}(P^{*c}) = W^\mathrm{c}(P^{*o}).
    \label{Eq:comparing welfare under cau_1}
\end{align}

\begin{figure}[t]
    \vspace{-2mm}
    \centering    
    \includegraphics[width=1.0\columnwidth]{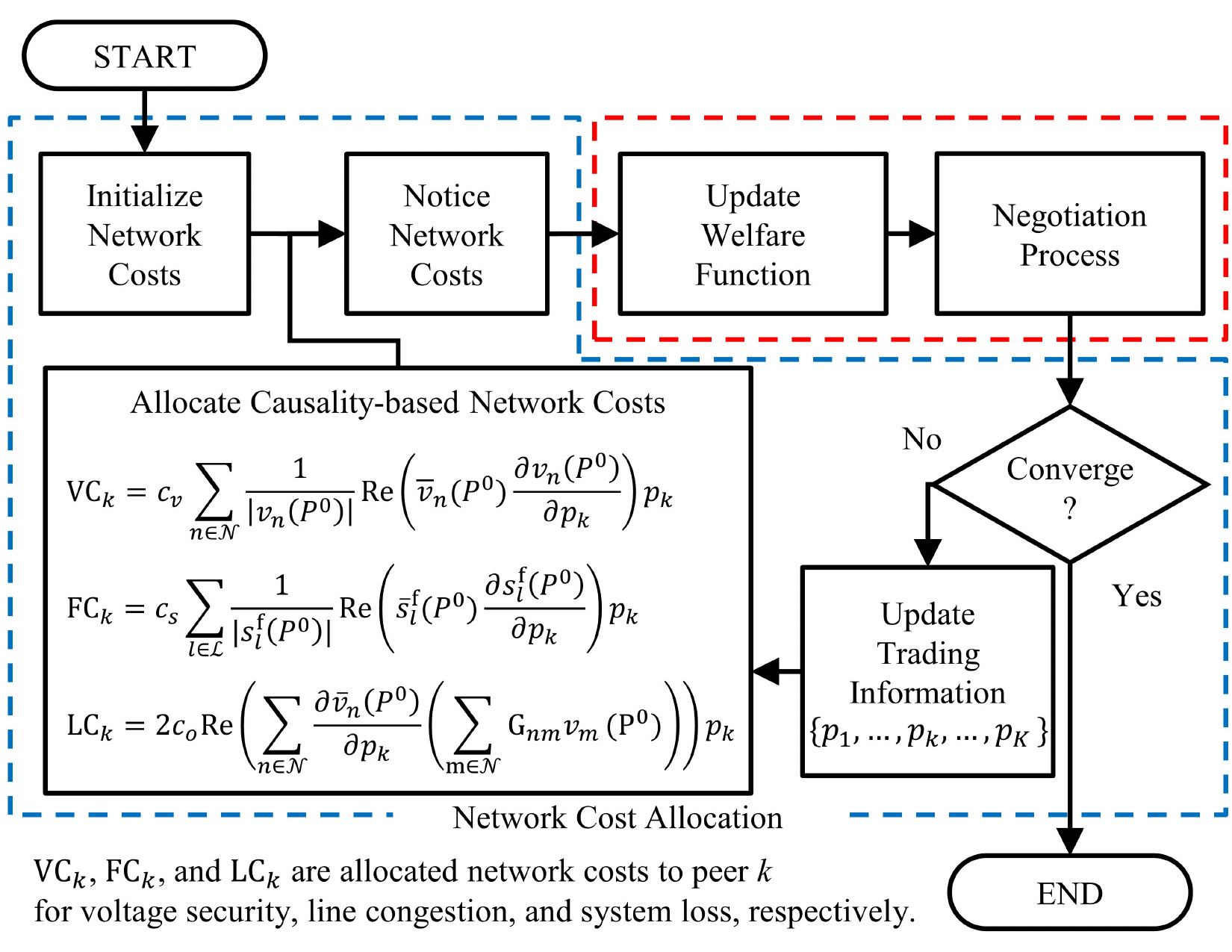}
    \caption{Coordination between P2P energy trading and distribution system using causality-based network cost}\vspace{0mm}
    \label{fig:flow_diagram}    
    \vspace{-4mm}
\end{figure}

Given the the welfare function in \eqref{welfare function under causality-based network allocation}, total social welfare can be formulated as follows:
\begin{align}
\begin{split}
    W^\mathrm{c}(P) &= \sum_{k\in \mathcal{A}}u_k(p_k) - {c_v}\sum_{k\in \mathcal{A}}\sum_{n\in \mathcal{N}}\frac{\partial |{v}_n(P^0)|}{\partial p_k}p_k \\
    -&{c_s}\sum_{k\in \mathcal{A}}\sum_{l\in \mathcal{L}}\frac{\partial |{s}_l^\mathrm{f}(P^0)|}{\partial p_k}p_k - {c_o}\sum_{k\in \mathcal{A}}\frac{\partial o(P^0)}{\partial p_k}p_k.
\end{split}
\label{total social welfare under cau}
\end{align}
Using linear approximation in \eqref{eq:partialtoP}, $W^\mathrm{c}(P)$ is reformulated as social maximization model in \eqref{optimal market equation}. Thus, we can derive that:
\begin{align}
    W^\mathrm{c}(P^{*c}) = W^\mathrm{c}(P^{*o}) = W^\mathrm{opt}(P^{*o}).
    \label{Eq:comparing welfare under cau_2}
\end{align}
\end{proof}
\vspace{-2mm}
Proposition~\ref{prop3} implies that causality-based network cost allocation ensures the optimal P2P energy trading outcomes. It is important to note, however, that network costs for voltage security and line congestion are imposed when network violations occur. Conversely, as system loss costs always occur once a transaction is made, market efficiency is substantially affected by system loss costs.

\subsection{Coordination of P2P energy trading with costs allocation}
We propose coordination of the P2P energy trading with causality-based cost allocation. This approach leans towards a negotiation-oriented trading mechanism, which facilitates welfare adjustments based on real-time network costs. 
As illustrated in Fig.~\ref{fig:flow_diagram}, prior to initiating P2P energy trading, the DSO establishes the initial network costs. Subsequently, this cost data is communiated to all participating peers. Informed by these costs, peers strategically build their initial trading plans and partake in energy trades with other peers.
During the negotiation process, the DSO collects the trading information $(P)$, and allocates the network costs $\mathrm{VC}_k, \mathrm{FC}_k, \mathrm{LC}_k$ following Proposition~\ref{prop:causality based cost allocation}.
As these costs are updated, peers recalibrate their welfare functions to reflect the changes as follows \cite{paudel2020peer}:
\begin{subequations}
    \begin{align}
        &p^{\tau+1}_k \!= \!\underset{p_k}{\mathrm{argmax}} \{u_k(p_k)\!-\!\mathrm{VC}^{\tau}_{k}\!-\!\mathrm{FC}^{\tau}_{k}\!-\!\mathrm{LC}^{\tau}_{k}\}, \hspace{0mm} \forall k\in \mathcal{A},\!\! \label{update_welfare_1} \\
        &\lambda^{\tau+1}_i = \Big[\lambda^{\tau}_i + \epsilon (\sum_{j\in \mathcal{A^\mathrm{B}}}p^{\tau+1}_{ji} - p^{\tau}_i)\Big]^+\!, \;\! \hspace{0mm}\forall i\in \mathcal{A}^\mathrm{S}, \label{update_welfare_2} \\
        & \mathrm{VC}^{\tau+1}_{k} = \Big[\mathrm{VC}^{\tau}_{k} + c_v\sum_{n\in \mathcal{N}} \mathrm{\Phi}_{n,k}p^{\tau+1}_k\Big]^+, \hspace{0mm} \forall k\in \mathcal{A},\!\! \label{update_welfare_3} \\
        & \mathrm{FC}^{\tau+1}_{k} = \Big[\mathrm{FC}^{\tau}_{k}+c_s\sum_{l\in \mathcal{L}}\mathrm{X}_{l,k}   p^{\tau+1}_k\Big]^+, \hspace{0mm} \forall k\in \mathcal{A},\!\! \label{update_welfare_4} \\ 
        & \mathrm{LC}^{\tau+1}_{k} = \Big[\mathrm{LC}^{\tau}_{k} + 2c_o \mathrm{\Psi}_{k} p^{\tau+1}_k\Big]^{+}, \hspace{0mm} \forall k\in \mathcal{A}.
        \label{update_welfare_5}
    \end{align}\label{update_welfare_ftn}
\end{subequations}
$\epsilon$ and $\tau$ are the tuning parameter and negotiation step. $[x]^+$ represent max(0,x). $\mathrm{\Phi}_{n,k},\mathrm{X}_{l,k}$ and $\mathrm{\Psi}_{k}$ are causal relationship factors between trading volume of peer $k$ in terms of voltage at node $n$ , flow at line $l$ and system loss, with the detailed derivation in the Appendix. Using the updated rule from~\eqref{update_welfare_2} to ~\eqref{update_welfare_5}, peers re-engage in negotiations for energy trading with the updated trading volume using~\eqref{update_welfare_1}. This entire process is iteratively continued until convergence in trading results is achieved.

\begin{figure}[b]
    \vspace{-3mm}
    \centering    
    \includegraphics[width=1.0\columnwidth]{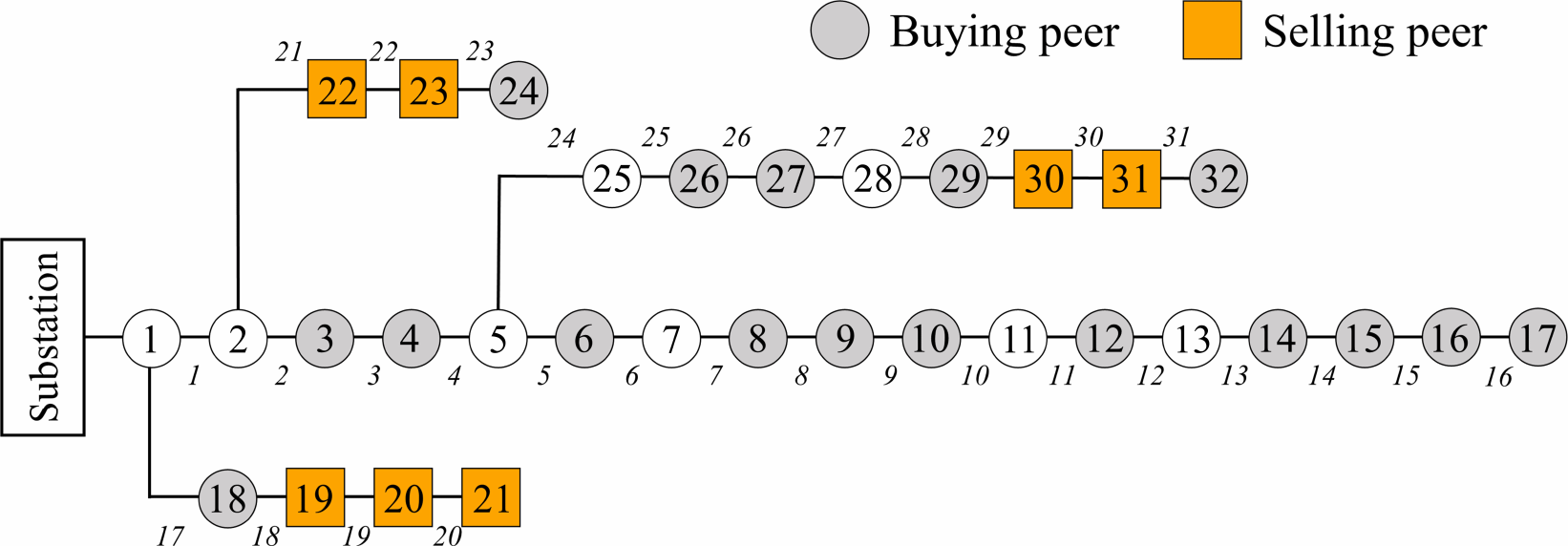}
    \caption{A modified IEEE 33-node distribution network with nodal indices in circles and squares. Italic numbers near edges represent line indices. Grey and orange nodes represent buying peers and selling peers. \vspace{0mm}}
    \label{mod IEEE 33}
    \vspace{-0mm}
\end{figure}
\begin{table}[b]
    \vspace{-3mm}
    \centering
    \caption{\vspace{0mm}Comparison between the base case, universal and causality-based network cost allocation policy in the total trading volume, social welfare, and system loss\vspace{-2mm}}
        \vspace{-0mm}
        \begin{center}
        \begin{tabular}{L{2cm} | C{2cm} C{1.5cm} C{1.5cm}}
            \toprule
            Policy Configuration & Total trading volume $[\mathrm{MWh}]$ & Social welfare [\$] & System loss $[\mathrm{MWh}]$ \\
            \midrule
            Base case  & 4.67 & 581.25 & 0.39 \\
            Universal & 4.50 & 580.60 & 0.36 \\
            Causality-based & 4.53 & 592.59 & 0.28 \\
            \bottomrule
        \end{tabular}
        \end{center}
        \label{t1:confi_loss}
        \vspace{-3mm}
\end{table}

\section{Case study}
The case study uses the modified 33-node distribution system as shown in Fig.~\ref{mod IEEE 33} \cite{dolatabadi2020enhanced} to assess the efficacy of our proposed coordination mechanism for P2P energy trading with causality-based cost allocation. 
For the modeling of the negotiatino process, we have employed the dual gradient method, as detailed in \cite{paudel2020peer}. All simulations were implemented using Python, and the comprehensive source code along with input data is available at \cite{codeanddata}.

\subsection{Simulation settings}
Seven selling peers have the capability to generate electricity with a maximum average installed capacity of 0.7$\mathrm{MW}$. Meanwhile, seventeen buying peers possess a maximum average demand of 0.38$\mathrm{MW}$. The peers engage in energy trading, determining their interactions based on the trading price and volumes for the energy to be delivered over a span of one hour. The configurations established for for this simulation include:
\begin{enumerate}
    \item Base case: No network costs are allocated to the peers.
    \item Universal policy: Peers pay the network costs in proportion to their trading volume at an uniform rate.
    \item Causality-based policy: Network costs are allocated based on causality principle.
\end{enumerate}
In Scenario 1, the focus is primarily on system loss cost while neglecting network constraints. In Scenario 2, we assess the effectiveness of the causality-based policy in achieving grid-aware P2P trading.

\begin{figure}[t]
    \vspace{-8mm}
    \centering    
    \includegraphics[width=1.03\columnwidth]{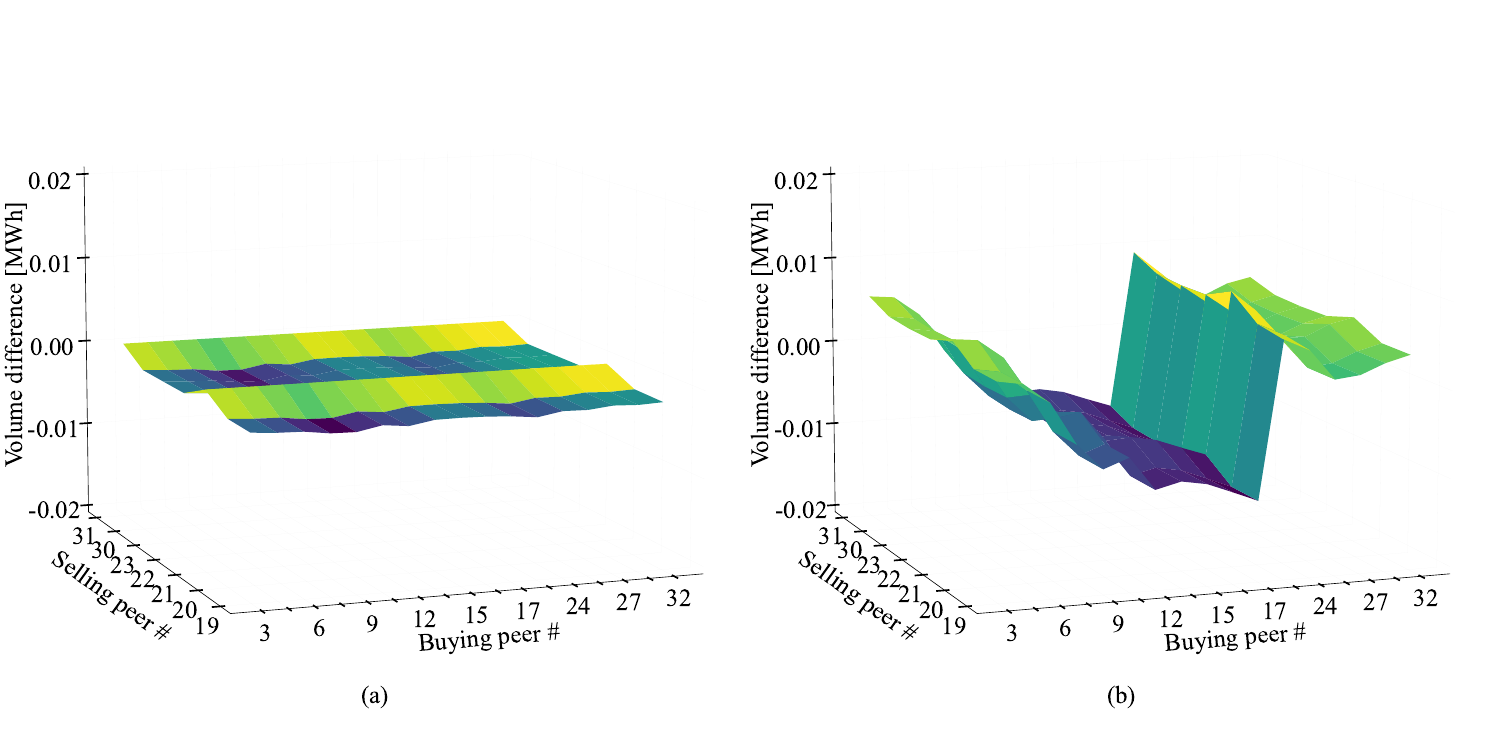}
    \vspace{-8mm}
    \caption{\vspace{-0mm}(a) Difference in trading volumes of peers between the universal policy and the base case, (b) Difference in trading volumes of peers between the causality-based policy and the base case.\vspace{0mm}}
    \label{fig:loss allocation}
    \vspace{3mm}
    \centering    
    \includegraphics[width=1.0\columnwidth]{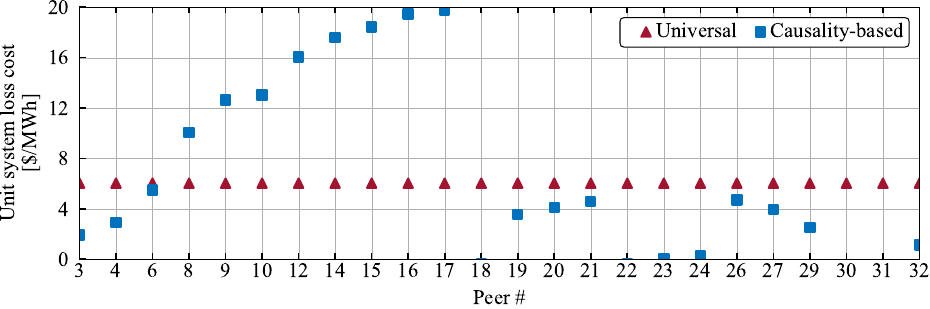}
    \vspace{-7mm}
    \caption{The unit system loss cost charged to peers.\vspace{0mm}}
    \label{fig:Unit system loss cost}    
    \vspace{-2mm}
\end{figure}

\subsection{Scenario 1: P2P platform with loss cost}

Table~\ref{t1:confi_loss} provides a comparison of the total trading volume, social welfare, and system loss across the three configurations. Compared to the base case, both the universal and causality-based policies result in minor reductions in total trading volume, by $0.17\mathrm{MWh}$ and $0.14\mathrm{MWh}$, respectively. In terms of social welfare, defined as the aggregate welfare of all peers minus the system loss cost, the causality-based policy exhibits the highest value. As the trading volume diminishes, a corresponding decrease in system loss is observed under both policies, resulting in improved system and market efficiency, quantified in terms of total trading volume.

Compare to the universal policy, the causality-based policy shows that the trading volume increases by $0.56\%$, but system loss decreases by $22.85\%$, which attributes changes in trading volume of each peer. 
Fig.~\ref{fig:loss allocation} illustrates the differences in trading volume among peers between the two policies and the base case. Under the universal policy, the trading volume of buying peers decrease on average by $0.0099\mathrm{MWh}$, and that of selling peers decrease by $0.024\mathrm{MWh}$. 
The variation in different trading volumes among peers is due to the distinct utility and cost coefficients in \eqref{eq:selling_peer_mdoel} and \eqref{utility ftn} characterizing each peer's welfare function. 
Peer 9, who has the smallest coefficients, experiences the most significant reduction in trading volume. 
In contrast, peer 27 who sees with the least reduction in trading volume has the largest coefficients. 
Even though there is a total decline in trading volume for the causality-based policy compared to the base case, not all peers experience a decrease.
Namely, buying peers 3, 4, 6, 18, 24, 26, 27, 29, and 32 exhibit an average increase in trading volume by $0.029\mathrm{MWh}$. Similarly, selling peer 21 and 31 present nearly unchanged trading volume.

\begin{figure}[b!]
    \vspace{-4mm}
    \centering    
    \includegraphics[width=1.0\columnwidth]{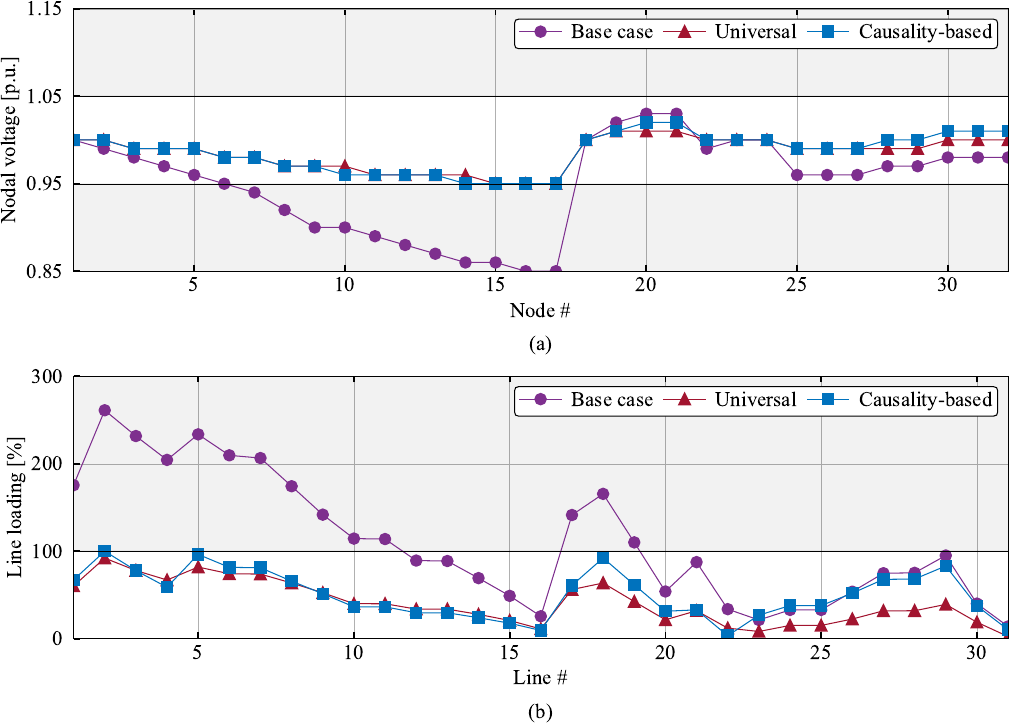}
    \vspace{-6mm}
    \caption{\vspace{0mm}(a) Distribution line loading(in \% relative to flow limit) and (b) Nodal voltage magnitudes under the base case, universal and causality-based policy\vspace{0mm}}
    \label{fig:vollineprofile}    
\end{figure}
\begin{table}[b!]
    \vspace{-2mm}
    \centering
    \vspace{-2mm}
    \caption{\vspace{-0mm} Trading volume, Social Welfare, System loss, Max/Min of Nodal Voltage, and Average Line Loading\vspace{-2mm}}
        \vspace{-2mm}
        \begin{center}
        \begin{tabular}{L{2cm} | C{0.9cm} C{0.9cm} C{0.9cm} C{0.9cm} C{0.9cm}}
            \toprule
            \centering
            Policy Configuration & Total trading volume $[\mathrm{MWh}]$ & Social welfare $[\$]$ & System loss $[\mathrm{MWh}]$ & Average voltage margin 
            $[\mathrm{p.u.}]$  & Average line loading margin $[\%]$ \\
            \midrule
            Base case & 4.67 & 581.25 & 0.39 & -0.005 & -8.23\\
            Universal & 1.76 & 351.14 & 0.049 & 0.032 & 59.19\\
            Causality-based  & 2.97 & 498.35 & 0.074  & 0.031 & 49.41\\
            \bottomrule
        \end{tabular}
        \end{center}
        \label{t2}
        \vspace{-3mm}
\end{table}

The reason for the difference in trading volume between peers is due to the allocated system loss cost. In the universal policy, a unit system loss cost of $\$6.04/\mathrm{MWh}$ is equally imposed as shown in Fig.~\ref{fig:Unit system loss cost}. However, in the causality-based policy, buying peers 8 to 17 are charged a larger value than the average unit system loss cost of $\$6.64/\mathrm{MWh}$, leading to a reduction in their trading volumes. On the other hand, selling peers 21 and 31, who have small incremental generation cost, are not influenced by the allocated system loss cost.

\begin{figure}[t]
    \vspace{-8mm}
    \centering    
    \includegraphics[width=1.03\columnwidth]{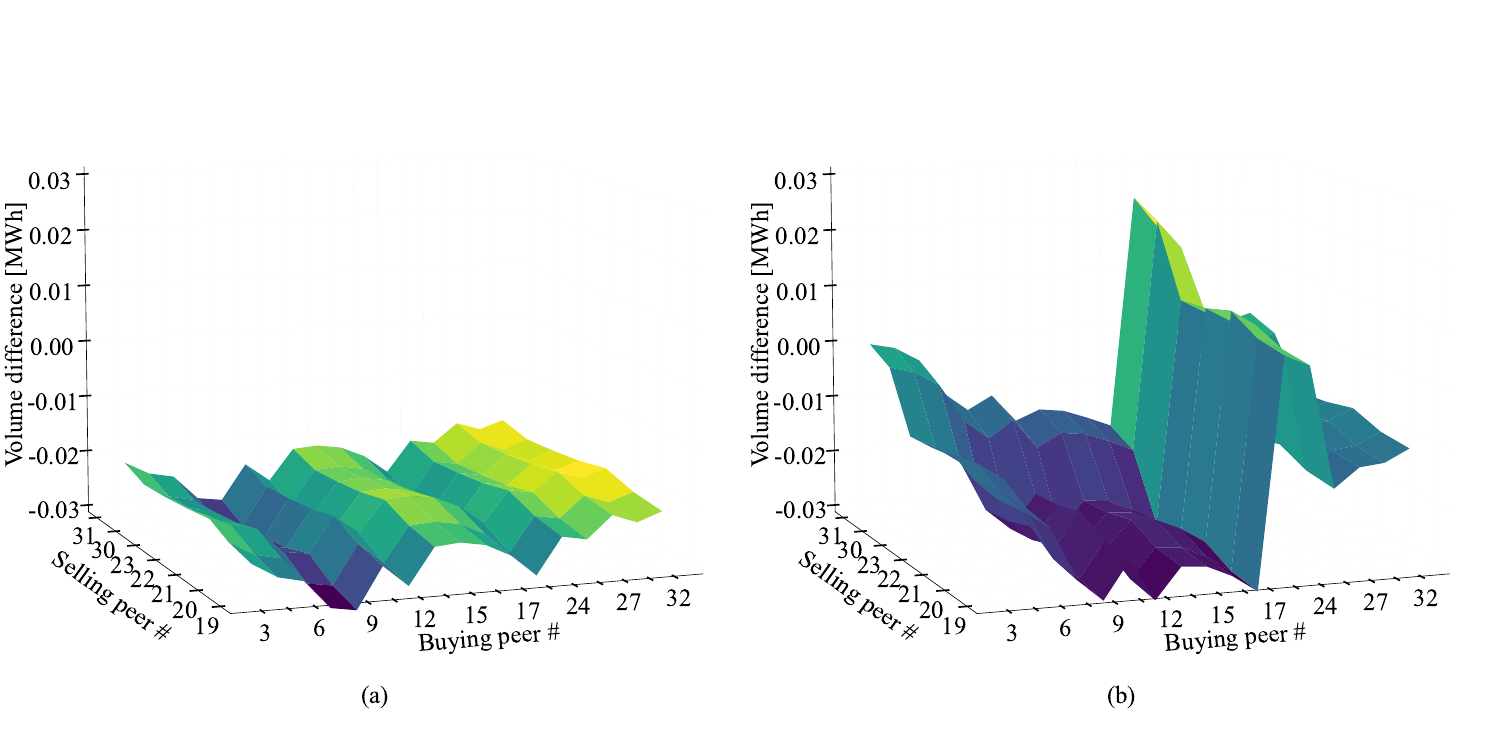}
    \vspace{-8mm}
    \caption{(a) Difference of peer's trading volumes between universal policy and the base case, (b) Difference of peer's trading volumes between causality-based policy and the base case \vspace{0mm}}
    \label{fig:diff_tradingvolume}    
    \vspace{4mm}
    \centering    
    \includegraphics[width=1.0\columnwidth]{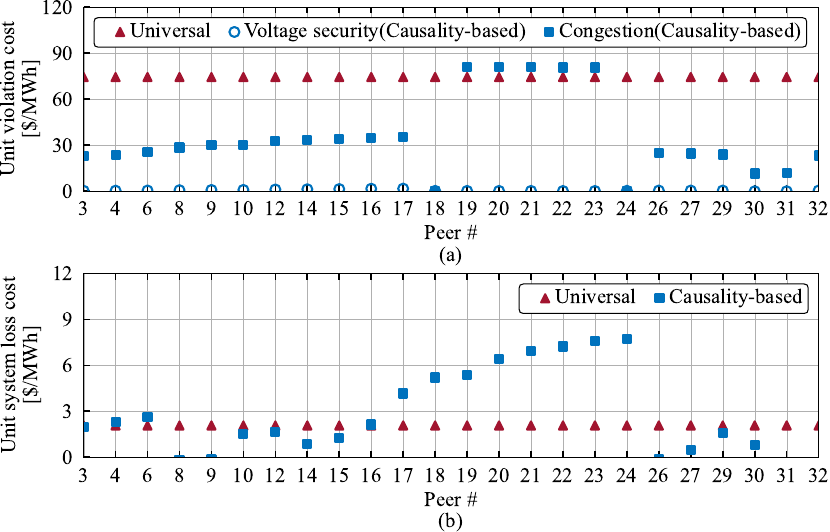}
    \vspace{-6mm}
    \caption{(a) The unit network costs charged to peers in terms of voltage security and congestion, (b) The unit system loss cost charged to peers.\vspace{-2mm}}
    \label{fig:unit cost diagram}
    \vspace{-2mm}
\end{figure}

\subsection{Scenario 2: P2P platform with all network costs}
Table~\ref{t2} shows trading results in terms of system loss, voltage security, and congestion. Compared to the base case, the total trading volume in the universal and causality-based policies decrease by $1.76\mathrm{MWh}$ and $2.97\mathrm{MWh}$. Social welfare and system loss decreased in proportion to total trading volume. In the base case, voltage drop occurs at nodes 7 to 17. In addition, congestion occurs in the lines 1 to 11 and 17 to 19 as shown in Fig.~\ref{fig:vollineprofile}. However, the two policies shows that the voltage magnitude and line loading exist in the security region. The causality-based policy has lower values for the average voltage and line loading margin. This implies that the causality-based policy allows the network used to its limits.

Fig.~\ref{fig:diff_tradingvolume} depicts the difference of trading volume resulting from the two policies comparing to the base case. In the universal policy, the trading volumes of buying peers and selling peers decrease on average by $0.17\mathrm{MWh}$ and $0.42\mathrm{MWh}$. Similar to the universal policy in scenario 1, peer 9 shows the most decrease in trading volume, while peer 27 has the least reduction of trading volume. Peer 19 has the most significant decrease in trading volume, while peer 21 and 31 show the least reduction in trading volume. 
Thus, the consistent decrease in trading volume of all peers contributes to the prevention of constraint violations. Whereas, the change in trading volume among peers varies depending on its impact on violation in the causality-based policy. The trading volume of peer 18 and 24 increase by $0.088\mathrm{MWh}$ and $0.066\mathrm{MWh}$. These increases in trading volume contribute to mitigating congestion on lines 1 to 11 and 17. The trading volume of peers 16 and 17, which directly influences the voltage drop at nodes 16 and 17, decreases by $0.18\mathrm{MWh}$ and $0.19\mathrm{MWh}$. Further, the decreased trading volumes of selling peers 30 and 31 contributes to alleviating congestion on lines 1 to 4.

In the universal policy, whenever violation is anticipated during the trading process, the DSO accumulates the unit violation cost and charges it to peers at a rate of $\$74.40/\mathrm{MWh}$ as shown in Fig.~\ref{fig:unit cost diagram}(a). The unit system loss cost is determined to be $\$2.08/\mathrm{MWh}$ as shown in Fig.~\ref{fig:unit cost diagram}(b). Hence, there is a consistent reduction in trading volume among peers. However, the causality-based policy distinguishes violation costs according to the impact of peers' actions on the network. As shown in Fig.~\ref{fig:unit cost diagram}(a), the DSO can prevent congestion by imposing high unit congestion costs on peers 19 to 23, who could directly influence on lines 1 to 11. Comparing to congestion costs, voltage costs are set to be relatively low because the congestion costs are sufficient to prevent voltage violation. Peer 17 is allocated the unit congestion cost of $\$35.04/\mathrm{MWh}$, whereas the unit voltage security cost is only $\$1.48/\mathrm{MWh}$. The unit system loss cost increases to the average of $\$2.72/\mathrm{MWh}$ compared to the universal policy, as the overall trading volume increases.

\section{conclusion}
This paper proposes a causality-based cost allocation method for the P2P energy trading and distribution network operation. The proposed cost allocation method charges network costs based on the causal relationship between the trading volume of individual peers and their impact on the distribution system.
We demonstrates the superiority of causality-based cost allocation over universal network costs allocation by showing that P2P energy trading results derived under the causality-based cost allocation are suitable for achieving the optimal trading outcome. Coordination between P2P energy trading and the causality-based cost allocation is presented. Simulation results highlight the enhanced energy efficiency and increased social welfare of the coordinated P2P energy trading compared to the universal cost allocation. This approach is also verified to be effective in ensuring voltage security and to prevent line congestion.

\section*{Acknowledgement}
This work was supported by the National Research Foundation of Korea (NRF) grant funded by the Korean government (MSIT) (No. RS-2023-00210018).

\appendix
\allowdisplaybreaks
\subsection{The first-order condition of Eq.~(\ref{special_case})}
According to the first-order condition at $P^{*u}$, \eqref{welfare function under universal network allocation} yields:
\begin{align}
&\frac{\partial w_k(p_k^{*\!u})}{\partial p_k}=\!{c_o}\!\bigg(\!\frac{\partial \Hat{o}(P^{*\!u})}{\partial p_k} \frac{p_k^{*\!u}}{\sum_{t\in \mathcal{A}}p_t^{*\!u}} 
\!-\! \Hat{o}(P^{*\!u})\frac{\sum_{t\in \mathcal{A}}p_t^{*\!u}-p_k^{*\!u}}{(\sum_{t\in \mathcal{A}}p^{*\!u}_t)^2}\bigg)\nonumber\\
\!
&\!+\!{c_v}\!\bigg(\!\sum_{v\in \mathcal{N}} \!\!\!\frac{\partial |\Hat{v}_n(P^{*\!u})|}{\partial p_k}\!\frac{p_k^{*\!u}}{\sum_{t\in \mathcal{A}}\!p_t^{*\!u}} \!-\! \sum_{n\in \mathcal{N}}\!|\Hat{v}_n\!(\!P^{*\!u}\!)|\frac{\sum_{t\in \mathcal{A}}\!p_t^{*\!u}-p_k^{*\!u}}{(\sum_{t\in \mathcal{A}}\!p_k^{*\!u})^2}\!\bigg) \nonumber\\
&\!+\!{c_s}\!\bigg(\sum_{l\in \mathcal{L}} \!\!\frac{\partial |\Hat{s}_l^f(P^{*\!u})|}{\partial p_k}\frac{p_k^{*\!u}}{\sum_{t\in \mathcal{A}}p_t^{*\!u}} \!-\! \sum_{l\in \mathcal{L}}\!|\Hat{s}_l^f\!(\!P^{*\!u}\!)|\frac{\sum_{t\in \mathcal{A}}\!p_t^{*\!u}-p_k^{*\!u}}{(\sum_{t\in \mathcal{A}}\!p_t^{*\!u})^2}\!\bigg)
\label{first order condition universal allocation}
\end{align}
while it holds that:
\begin{align}
\begin{split}
\frac{\partial c_v}{\partial v(P)}\frac{\partial v(P)}{\partial p_k}=0, \quad
\frac{\partial c_s}{\partial v(P)}\frac{\partial v(P)}{\partial p_k}=0.
\end{split}
\end{align}
\subsection{Causal relationship factors of Eq.~(\ref{update_welfare_ftn})}
Based on the Proposition~\ref{prop:causality based cost allocation}, causal relationship factors $\mathrm{\Phi}_{n,k},\mathrm{X}_{l,k}$ and $\mathrm{\Psi}_{k}$ can be defined as follows:
\begin{subequations}
    \begin{align}
        & \mathrm{\Phi}_{n,k} = \frac{1}{|v_n(P^0)|}\!\operatorname{Re}\!\bigg(\!\overline{v}_n(P^0)\frac{\partial v_n(P^0)}{\partial p_k}\bigg), \\
        & \mathrm{X}_{l,k} = \frac{1}{|s_l^\mathrm{f}(P^0)|}\!\operatorname{Re}\!\bigg(\!\overline{s}_l^\mathrm{f}(P^0)\frac{\partial s_l^\mathrm{f}(P^0)}{\partial p_k}\bigg), \\  
        & \mathrm{\Psi}_{k} = \operatorname{Re}\!\Bigg(\!\sum_{n=0}^{N}\!\frac{\partial \overline{v}_n(P^0)}{\partial p_k}\!\bigg(\!\sum_{m=0}^{N}\!{G}_{nm}v_m(P^0)\!\bigg)\!\Bigg).
    \end{align}
\end{subequations}

\bibliographystyle{IEEEtran}
\bibliography{ref}

\begin{thebibliography}{10}
\providecommand{\url}[1]{#1}
\csname url@samestyle\endcsname
\providecommand{\newblock}{\relax}
\providecommand{\bibinfo}[2]{#2}
\providecommand{\BIBentrySTDinterwordspacing}{\spaceskip=0pt\relax}
\providecommand{\BIBentryALTinterwordstretchfactor}{4}
\providecommand{\BIBentryALTinterwordspacing}{\spaceskip=\fontdimen2\font plus
\BIBentryALTinterwordstretchfactor\fontdimen3\font minus
  \fontdimen4\font\relax}
\providecommand{\BIBforeignlanguage}[2]{{%
\expandafter\ifx\csname l@#1\endcsname\relax
\typeout{** WARNING: IEEEtran.bst: No hyphenation pattern has been}%
\typeout{** loaded for the language `#1'. Using the pattern for}%
\typeout{** the default language instead.}%
\else
\language=\csname l@#1\endcsname
\fi
#2}}
\providecommand{\BIBdecl}{\relax}
\BIBdecl

\bibitem{kim2023pricing}
H.~J. Kim \emph{et~al.}, ``Pricing mechanisms for peer-to-peer energy trading:
  Towards an integrated understanding of energy and network service pricing
  mechanisms,'' \emph{Renewable and Sustainable Energy Reviews}, vol. 183,
  2023.

\bibitem{tushar2021peer}
W.~Tushar, C.~Yuen, T.~K. Saha, T.~Morstyn, A.~C. Chapman, M.~J.~E. Alam,
  S.~Hanif, and H.~V. Poor, ``Peer-to-peer energy systems for connected
  communities: A review of recent advances and emerging challenges,''
  \emph{Applied Energy}, vol. 282, p. 116131, 2021.

\bibitem{tushar2019grid}
W.~Tushar, T.~K. Saha, C.~Yuen, T.~Morstyn, H.~V. Poor, R.~Bean \emph{et~al.},
  ``Grid influenced peer-to-peer energy trading,'' \emph{IEEE Transactions on
  Smart Grid}, vol.~11, no.~2, pp. 1407--1418, 2019.

\bibitem{kim2019direct}
H.~Kim, J.~Lee, S.~Bahrami, and V.~W. Wong, ``Direct energy trading of
  microgrids in distribution energy market,'' \emph{IEEE Transactions on Power
  Systems}, vol.~35, no.~1, pp. 639--651, 2019.

\bibitem{nguyen2021distributed}
D.~H. Nguyen and T.~Ishihara, ``Distributed peer-to-peer energy trading for
  residential fuel cell combined heat and power systems,'' \emph{International
  Journal of Electrical Power \& Energy Systems}, vol. 125, 2021.

\bibitem{anoh2019energy}
K.~Anoh, S.~Maharjan, A.~Ikpehai, Y.~Zhang, and B.~Adebisi, ``Energy
  peer-to-peer trading in virtual microgrids in smart grids: A game-theoretic
  approach,'' \emph{IEEE Transactions on Smart Grid}, vol.~11, no.~2, pp.
  1264--1275, 2019.

\bibitem{paudel2020peer}
A.~Paudel, L.~P. M.~I. Sampath, J.~Yang, and H.~B. Gooi, ``Peer-to-peer energy
  trading in smart grid considering power losses and network fees,'' \emph{IEEE
  Transactions on Smart Grid}, vol.~11, no.~6, pp. 4727--4737, 2020.

\bibitem{baroche2019exogenous}
T.~Baroche, P.~Pinson, R.~L.~G. Latimier, and H.~B. Ahmed, ``Exogenous cost
  allocation in peer-to-peer electricity markets,'' \emph{IEEE Transactions on
  Power Systems}, vol.~34, no.~4, pp. 2553--2564, 2019.

\bibitem{guerrero2018decentralized}
J.~Guerrero, A.~C. Chapman, and G.~Verbi{\v{c}}, ``Decentralized p2p energy
  trading under network constraints in a low-voltage network,'' \emph{IEEE
  Transactions on Smart Grid}, vol.~10, no.~5, pp. 5163--5173, 2018.

\bibitem{haggi2021multi}
H.~Haggi and W.~Sun, ``Multi-round double auction-enabled peer-to-peer energy
  exchange in active distribution networks,'' \emph{IEEE Trans. Smart Grid},
  vol.~12, 2021.

\bibitem{morstyn2019integrating}
T.~Morstyn, A.~Teytelboym, C.~Hepburn, and M.~D. McCulloch, ``Integrating p2p
  energy trading with probabilistic distribution locational marginal pricing,''
  \emph{IEEE Transactions on Smart Grid}, vol.~11, no.~4, pp. 3095--3106, 2019.

\bibitem{kim2019p2p}
J.~Kim and Y.~Dvorkin, ``A p2p-dominant distribution system architecture,''
  \emph{IEEE Trans. Power Syst.}, 2019.

\bibitem{maser2011s}
G.~Maser, ``It's electric, but ferc's cost-causation boogie-woogie fails to
  justify socialized costs for renewable transmission,'' \emph{Geo. LJ}, vol.
  100, p. 1829, 2011.

\bibitem{samadi2010optimal}
P.~Samadi \emph{et~al.}, ``Optimal real-time pricing algorithm based on utility
  maximization for smart grid,'' in \emph{2010 First IEEE international
  conference on smart grid communications}.\hskip 1em plus 0.5em minus
  0.4em\relax IEEE, 2010.

\bibitem{rosen1965existence}
J.~B. Rosen, ``Existence and uniqueness of equilibrium points for concave
  n-person games,'' \emph{Econometrica: Journal of the Econometric Society},
  pp. 520--534, 1965.

\bibitem{christakou2013efficient}
K.~Christakou, J.-Y. LeBoudec, M.~Paolone, and D.-C. Tomozei, ``Efficient
  computation of sensitivity coefficients of node voltages and line currents in
  unbalanced radial electrical distribution networks,'' \emph{IEEE Transactions
  on Smart Grid}, vol.~4, no.~2, pp. 741--750, 2013.

\bibitem{zhou2008simplified}
Q.~Zhou and J.~Bialek, ``Simplified calculation of voltage and loss sensitivity
  factors in distribution networks,'' in \emph{Proc. 16th Power Syst. Comput.
  Conf.(PSCC2008)}, 2008.

\bibitem{yan2020distribution}
M.~Yan, M.~Shahidehpour, A.~Paaso, L.~Zhang, A.~Alabdulwahab, and A.~Abusorrah,
  ``Distribution network-constrained optimization of peer-to-peer transactive
  energy trading among multi-microgrids,'' \emph{IEEE transactions on smart
  grid}, vol.~12, no.~2, pp. 1033--1047, 2020.

\bibitem{dolatabadi2020enhanced}
S.~H. Dolatabadi, M.~Ghorbanian, P.~Siano, and N.~D. Hatziargyriou, ``An
  enhanced {IEEE} 33 bus benchmark test system for distribution system
  studies,'' \emph{IEEE Transactions on Power Systems}, vol.~36, no.~3, pp.
  2565--2572, 2020.

\bibitem{codeanddata}
H.~J. Kim \emph{et~al.}, ``Code supplement for causality-based cost allocation
  for p2p energy trading in distribution system,'' 2023,
  \url{https://github.com/githjkim/causality_cost_allocation.git}.

\end{thebibliography}

\end{document}